\newtheorem{definition}{Definition}
\newtheorem{theorem}{Theorem}
\newtheorem{corollary}{Corollary}[theorem]
\newtheorem{lemma}[theorem]{Lemma}
\def\dub#1{\underline{\underline{#1}}}
\begin{document}

\title{Coupled Kohn-Sham equations for electrons and phonons}

\author{Chung-Yu Wang}
\author{T. M\"{u}ller}
\author{S. Sharma}
\author{E. K. U. Gross}
\author{J. K. Dewhurst}
\affiliation{
Max-Planck-Institut f\"{u}r Mikrostrukturphysik, Weinberg 2, D-06120 Halle, Germany
}
\date{\today}

\begin{abstract}
This work establishes the algebraic structure of the Kohn-Sham equations
to be solved in a density formulation
of electron and phonon dynamics, including the superconducting
order parameter. A Bogoliubov transform is required
to diagonalize both the fermionic and bosonic
Kohn-Sham Hamiltonians since they both represent a non-interacting
quantum field theory.
The Bogoliubov transform for phonons is non-Hermitian in the general case,
and the corresponding time-evolution is non-unitary. Several sufficient
conditions for ensuring that the bosonic eigenvalues are real are provided and
a practical method for solving the system is described.
Finally, we produce a set of approximate mean-field potentials which are
functionals of the electronic and phononic density matrices and depend on the
electron-phonon vertex.
\end{abstract}

\maketitle

In this work we determine time-dependent Kohn-Sham matrix equations used
for combined systems of electron and phonons. Ultimately, the
potentials which enter the equations are considered to be functionals
of the density matrices produced from the time-evolving Kohn-Sham state.
One particular aim of this work is to include lattice degrees of freedom in simulations
of intense laser pulses acting on solids. This is necessary for the recovery
of the magnetic moment or the superconducting order parameter which are
typically destroyed by the laser pulse.

\section{Densities of the electron-nuclear system}
Consider the electron-nuclear Schr\"{o}dinger equation in atomic units:
\begin{align}\label{en_se}
 \hat{H}=\frac{1}{2}\sum_{i}\nabla_i^2
 +\sum_{I=1}\frac{1}{2M_I}\nabla_I^2
 +\sum_{i>j}\frac{1}{|{\bf r}_i-{\bf r}_j|}
 +\sum_{i,I}\frac{Z_I}{|{\bf r}_i-{\bf R}_I|}
 +\sum_{I>J}
 \frac{Z_I Z_J}{|{\bf R}_I-{\bf R}_J|}
\end{align}
for $i,j=1\ldots N_{\rm e}$ electrons
and $I,J=1\ldots N_{\rm n}$ nuclei, where $M_I$ is the nuclear
mass and $Z_I$ is the nuclear charge, assumed negative.
The wave function $\Psi(\dub{\bf r},\dub{s},\dub{\bf R},\dub{S},t)$,
where $\dub{s}$ and
$\dub{S}$ are electron and nuclear spin coordinates,
is determined in a finite (but large) box with periodic boundary conditions.

Conventional densities obtained from this wave function are spatially
constant and therefore not useful as variational quantities and a
different approach to density functional theory (DFT) is required.
The electron-nuclear wave function can be factored exactly\cite{Abedi2010} as:
\begin{align}\label{wf_fact}
 \Psi(\dub{\bf r},\dub{s},\dub{\bf R},\dub{S},t)=
 \Phi_{\dub{\bf R},\dub{S}}(\dub{\bf r},\dub{s},t)
 \chi(\dub{\bf R},\dub{S},t),
\end{align}
where
$\sum_{\dub{s}}\int d\dub{\bf r}\,|\Phi_{\dub{\bf R},\dub{S}}(\dub{\bf r},\dub{s},t)|^2=1$
for all $\dub{\bf R}$, $\dub{S}$ and $t$.

Let $V_{\rm BO}(\dub{\bf R})$ be the Born-Oppenheimer (BO) potential energy
surface (PES)\footnote{The BO PES is defined to be the ground state electronic
eigenvalue obtained from (\ref{en_se}) where the nuclear kinetic operator
is removed and the dependence on $\dub{\bf R}$ is parametric.}
and suppose this has a unique minimum at $\dub{\bf R}^0$.

\subsection{Electronic densities}
A purely electronic wave function is obtained by evaluating
$\Phi_{\dub{\bf R}^0\dub{S}}(\dub{\bf r},\dub{s},t)$. From this, a variety of
familiar electronic densities may be obtained, for example
\begin{align}\label{rho_0}
 \rho_{\dub{\bf R}^0}({\bf r},t)\equiv
 \sum_{\dub{S}}\int d^3r_2\ldots d^3r_{N_{\rm e}}
 \left|\Phi_{\dub{\bf R}^0,\dub{S}}(\dub{\bf r},\dub{s},t)\right|^2,
\end{align}
with similar definitions for the magnetization ${\bf m}({\bf r})$,
current density ${\bf j}({\bf r})$,
superconducting order parameter, $\chi({\bf r},{\bf r}')$ and so on.
Such a density is plotted in
Fig. \ref{hydrogen} for the hydrogen atom using various masses. Note that
this density is not a constant and also varies with the nuclear mass. The
densities for $M=\infty$ and the physical mass of a proton, $M\simeq 1836$, are
indistinguishable. However, the density is considerably different when the
nuclear and electronic masses are the same, $M=1$.
In the same figure is a plot of the density evaluated at a particular point
against $1/M$. The density decreases monotonically with
reciprocal mass and has a non-zero derivative at $1/M=0$.

A Kohn-Sham Hamiltonian defined to reproduce the density in
(\ref{rho_0}) as its ground state can be written as
\begin{align}\label{KS_fm_r}
 \hat{H}_{\rm KS}=-\frac{1}{2}\nabla^2+V_{\dub{\bf R}^0}({\bf r})
 +V_{\rm H}({\bf r})+V_{\rm xc}({\bf r})
 +V_{\rm fmc}({\bf r},t),
\end{align}
where $V_{\dub{\bf R}^0}({\bf r})$ is the external potential
determined from the nuclei fixed at $\dub{\bf R}^0$;
$V_{\rm H}$ and $V_{\rm xc}$ are the usual Hartree and
exchange-correlation potential; and $V_{\rm fmc}$ is a correction term
to account for the finite mass of the nuclei. Note that this potential
vanishes in the infinite mass limit, i.e.
$\lim_{M\rightarrow\infty}V_{\rm fmc}({\bf r},t)=0$, and the regular
Kohn-Sham equations for a fixed external potential are recovered.
The finite mass correction potential is plotted in Fig. \ref{hydrogen} for
hydrogen with an artificially light $M=2$. Not surprisingly, the potential is
mainly repulsive.
Mass correction potentials corresponding to other densities can also be defined
such as a magnetic field ${\bf B}_{\rm fmc}({\bf r},t)$
or a pairing potential $\Delta_{\rm fmc}({\bf r},{\bf r}',t)$. In the latter
case, the finite mass correction constitutes the entire potential for
phonon-coupled superconductors.

\begin{figure}[ht]
\centerline{\includegraphics[width=0.95\textwidth]{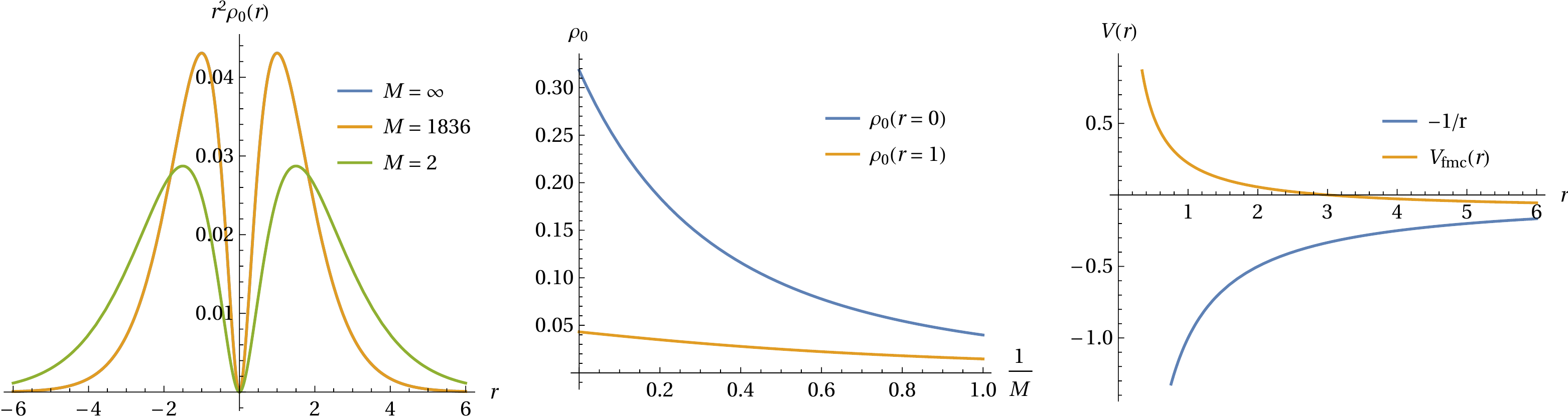}}
\caption{On the left is a plot of the electronic charge density times $r^2$,
as defined in (\ref{rho_0}), versus $r$ for various nuclear masses.
In the middle is the charge density evaluated at $r=0$ and $r=1$
plotted as a function of $1/M$.
On the right is a plot of the finite mass correction potential,
evaluated for $M=2$, plotted
alongside the nuclear potential $-1/r$.}\label{hydrogen}
\end{figure}

\subsection{Phonon densities}
We now consider the expansion of the BO PES around $\dub{\bf R}^0$ and
assume that the leading order, apart from a constant, is quadratic:
\begin{align}
 V_{\rm BO}(\dub{\bf R})=V_{\rm BO}(\dub{\bf R}^0)
 +\frac{1}{2}\sum_{I\alpha,J\beta}u_{I\alpha}
 K_{I\alpha,J\beta}u_{J\beta}+\cdots
\end{align}
where
$K_{I\alpha,J\beta}\equiv
 \left.\partial^2V_{\rm BO}/\partial R_{I\alpha}
 \partial R_{J\beta}\right|_{\dub{\bf R}^0}$,
$\dub{\bf u}\equiv \dub{\bf R}-\dub{\bf R}^0$
and $\alpha$, $\beta$ represent Cartesian directions.
The associated classical modes, called phonons, are determined by solving
the eigenvalue equation
\begin{align}\label{evphn}
 K{\bf e}_n=\nu_n^2 M{\bf e}_n
\end{align}
for $\nu_n$ and ${\bf e}_n$,
where $M_{I\alpha,J\beta}\equiv M_I\delta_{IJ}\delta_{\alpha\beta}$
is the diagonal matrix of nuclear masses.
Let $\hat{p}_{I\alpha}\equiv-i\partial_{I\alpha}$ be the momentum operator
which acts on a particular nuclear coordinate, then
$[\hat{u}_{I\alpha},\hat{p}_{J\beta}]=i\delta_{IJ}\delta_{\alpha\beta}$.
We can also define
\begin{align}
 \hat{\mathcal{U}}\equiv \mathcal{S}\hat{\bf u} \qquad
 \hat{\mathcal{P}}\equiv \mathcal{T}\hat{\bf p},
\end{align}
where $\mathcal{S}=2^{-\frac{1}{2}}\nu^{\frac{1}{2}}{\bf e}^t$,
$\mathcal{T}=2^{-\frac{1}{2}}\nu^{-\frac{1}{2}}{\bf e}^t M^{-1}$
and $\nu$ is the diagonal matrix of eigenvalues,
then $[\hat{\mathcal{U}},\hat{\mathcal{P}}]=\frac{i}{2}I$ and
$\hat{H}^{\rm b}=\hat{\mathcal{P}}^t\nu\hat{\mathcal{P}}
 +\hat{\mathcal{U}}^t\nu\hat{\mathcal{U}}$.
Writing
\begin{align}
 \hat{d}=\hat{\mathcal{U}}+i\hat{\mathcal{P}} \qquad
 \hat{d}^{\dag}=\hat{\mathcal{U}}^t-i\hat{\mathcal{P}}^t,
\end{align}
the Hamiltonian is cast in diagonal form
\begin{align}
 \hat{H}^{\rm b}=\sum_i\nu_i\left(\hat{d}_i^{\dag}\hat{d}_i+\frac{1}{2}\right).
\end{align}

We will equate the {\em exact}
expectation values of nuclear positions, momenta and
bilinear combinations thereof with those
of a fictitious, non-interacting bosonic system.
Thus if the expectation values $\langle\hat{d}_i^{\dag}\rangle$ and
$\langle\hat{d}_i\rangle$
are known, then expectation values of the displacement
and momentum operators can be reconstructed from
$\langle\hat{\bf u}\rangle=\frac{1}{2}\mathcal{S}^{-1}
 (\langle\hat{d}^{\dag}\rangle^t+\langle\hat{d}\rangle)$ and
$\langle\hat{\bf p}\rangle=\frac{i}{2}\mathcal{T}^{-1}
 (\langle\hat{d}^{\dag}\rangle^t-\langle\hat{d}\rangle)$.
Bilinear expectation values
$\langle\hat{d}_i^{\dag}\hat{d}_j^{\dag}\rangle$,
$\langle\hat{d}_i\hat{d}_j\rangle$ and
$\langle\hat{d}_i^{\dag}\hat{d}_j\rangle$ can be used to
evaluate corresponding products of momentum and position.
For instance
\begin{align}
 \langle\hat{\bf u}\otimes\hat{\bf p}\rangle
 =\frac{i}{4}\mathcal{S}^{-1}\left\langle
 (\hat{d}^{\dag})^t\hat{d}^{\dag}
 -(\hat{d}^{\dag})^t(\hat{d})^t
 -\hat{d}\hat{d}^{\dag}
 +\hat{d}(\hat{d})^t\right\rangle(\mathcal{T}^{-1})^t.
\end{align}
Note that in the unperturbed harmonic oscillator ground state, all these
expectation values are zero.
A further point is that the Hermiticity of the second-quantized
bosonic system described
below renders some of these expectation values inaccessible, one of which
is the nuclear current density. By removing the Hermitian constraint this
restriction is lifted.

\section{Algebraic form of the electron and phonon Kohn-Sham equations}
In this section, the details of the Kohn-Sham Hamiltonian, such as
that in (\ref{KS_fm_r}), are removed and we focus on the algebraic
structure instead. This is done by considering only the matrix elements of
the electron and phonon Hamiltonians. In the following section all matrices
are taken to be finite in size.

\subsection{Kohn-Sham Hamiltonian for electrons}
The most general fermionic Kohn-Sham Hamiltonian of interest here has the form
\begin{align}\label{Hfm_ks}
 \hat{H}_s^{\rm f}=
 \sum_{i,j=1}^{n_{\rm f}}A_{ij}\hat{a}_i^{\dag}\hat{a}_j
 +B_{ij}\hat{a}_i^{\dag}\hat{a}_j^{\dag}
 -B_{ij}^*\hat{a}_i\hat{a}_j,
\end{align}
where $A$ is a Hermitian matrix representing (\ref{KS_fm_r});
$B$ is antisymmetric and
corresponds to the matrix elements of the superconducting pairing potential
$\Delta({\bf r},{\bf r}')$.
The sum runs to the number of fermionic basis vectors $n_{\rm f}$.
The matrix $A$ includes a chemical potential term
$A_{ij}\rightarrow A_{ij}+\mu\delta_{ij}$ which is used to fix the total
electronic number to $N_{\rm e}$.
The Hermitian eigenvalue problem
\begin{align}\label{hm_bog_fm}
 \Bigg(\begin{matrix}
  A &  B \\
  B^{\dag} & -A^*
 \end{matrix}\Bigg)
 \Bigg(\begin{matrix}
  \vec{U}_j \\
  \vec{V}_j
 \end{matrix}\Bigg)
 =\varepsilon_j
 \Bigg(\begin{matrix}
  \vec{U}_j \\
  \vec{V}_j
 \end{matrix}\Bigg)
\end{align}
yields $2n_{\rm f}$ solutions.
However, if $\varepsilon_j$ and $(\vec{U}_j,\vec{V}_j)$ are an eigenpair, then
so are $-\varepsilon_j$ and $(\vec{V}_j^*,\vec{U}_j^*)$.
Now we select $n_{\rm f}$ eigenpairs
with each corresponding to either a positive or negative eigenvalues but
with its conjugate partner not in the set.
This choice will not affect the eventual Kohn-Sham ground state.
Let $U$ and $V$ be the $n_{\rm f}\times n_{\rm f}$ matrices with these
solutions arranged column-wise.
Orthogonality of the vectors is then expressed as
\begin{align}
 \Bigg(\begin{matrix}
  \,U\, &  \,V^*\, \\
  \,V\, &  \,U^*\,
 \end{matrix}\Bigg)^{\dag}
 \Bigg(\begin{matrix}
  \,U\, &  \,V^*\, \\
  \,V\, &  \,U^*\,
 \end{matrix}\Bigg)
 =I,
\end{align}
which implies $U^{\dag}U+V^{\dag}V=I$ and
$U^{\dag}V^*+V^{\dag}U^*=0$. Completeness further implies
$UU^{\dag}+V^*V^t=I$ and $UV^{\dag}+V^*U^t=0$.
The Hamiltonian (\ref{Hfm_ks}) can now
be diagonalized with the aid of $U$ and $V$ via a Bogoliubov transformation:
\begin{align}\label{bog_tfm}
\begin{split}
 \hat{\alpha}_j^{\dag}&=\sum_{i=1}^{n_{\rm f}} U_{ij}\hat{a}_i^{\dag}+V_{ij}\hat{a}_i \\
 \hat{\alpha}_j&=\sum_{i=1}^{n_{\rm f}} U_{ij}^*\hat{a}_i+V_{ij}^*\hat{a}_i^{\dag},
\end{split}
\end{align}
in other words
\begin{align}
 \hat{H}_s=\sum_{i=1}^{n_{\rm f}}\varepsilon_i\hat{\alpha}_i^{\dag}\hat{\alpha}_i+W_0,
\end{align}
where $W_0=-{\rm tr}(V\varepsilon V^{\dag})$. The fermionic algebra
is also preserved for $\hat{\alpha}$:
\begin{align}\label{alpha_acr}
 \bigl\{\hat{\alpha}_i,\hat{\alpha}_j^{\dag}\bigr\}=\delta_{ij} \qquad
 \bigl\{\hat{\alpha}_i,\hat{\alpha}_j\bigr\}=0 \qquad
 \bigl\{\hat{\alpha}_i^{\dag},\hat{\alpha}_j^{\dag}\bigr\}=0.
\end{align}

\subsubsection{Non-interacting ground state}
Given $A$ and $B$, the matrices $U$, $V$ and $\varepsilon$ are fixed by
the Kohn-Sham-Bogoliubov equations (\ref{hm_bog_fm}).
What remains is to construct from these the eigenstates of
(\ref{Hfm_ks}) in the Fock space. To do so, one first needs
to find a normalized vacuum state which is anihilated by all the $\hat{\alpha}_j$.
Here it is (denoted $|\bar{0}\rangle$ so as to distinguish it from the normal vacuum state
$|0\rangle$):
\begin{align}
 |\bar{0}\rangle\equiv\prod_{j=1}^{n_{\rm f}}\hat{U}_j\prod_{k=1}^{n_{\rm f}}
 \hat{a}_k^{\dag}|0\rangle+\prod_{j=1}^{n_{\rm f}}\hat{V}_j^{\dag}|0\rangle,
\end{align}
where $\hat{U}_j\equiv\sum_i U_{ij}^*\hat{a}_i$ and
$\hat{V}_j^{\dag}\equiv\sum_i V_{ij}^*\hat{a}_i^{\dag}$. It is readily
verified that
$\hat{\alpha}_j|\bar{0}\rangle=0$
for all $j$; the vacuum has the correct normalisation
$\langle\bar{0}|\bar{0}\rangle=1$;
and the vacuum energy $\langle\bar{0}|H_s|\bar{0}\rangle=W_0$.
The non-interacting many-body ground state can be constructed
in analogy with the usual fermionic situation. Let $M$ be the number of
$\varepsilon_j<0$, then the ground state
\begin{align}\label{gs_fm}
 |\Phi_0\rangle=\prod_{j=1}^M\hat{\alpha}_j^{\dag}|\bar{0}\rangle,
\end{align}
so that
\begin{align}
 \hat{H}_s|\Phi_0\rangle=E_0^s|\Phi_0\rangle,
\end{align}
where $E_0^s=\sum_{j=1}^M\varepsilon_j+W_0$.

\subsection{Normal and anomalous densities}
To determine the densities, both normal and anomalous, one first has to find
the expectation values of pairs of $\hat{a}$ and $\hat{a}^{\dag}$. These in turn
are linear combinations of expectation values of pairs of $\hat{\alpha}$ and
$\hat{\alpha}^{\dag}$. Using the anti-commutation relations (\ref{alpha_acr})
and remembering that $\hat{\alpha}|\bar{0}\rangle=0$, we get
\begin{align}\label{alpha_mat_1}
 \langle\Phi_0|\hat{\alpha}_i^{\dag}\hat{\alpha}_j|\Phi_0\rangle=
 \begin{cases}
  \delta_{ij} & i,j\le M \\
  0 & i,j>M
 \end{cases} \qquad
 \langle\Phi_0|\hat{\alpha}_i\hat{\alpha}_j^{\dag}|\Phi_0\rangle=
 \begin{cases}
  0 & i,j\le M \\
  \delta_{ij} & i,j>M
 \end{cases}
\end{align}
and
\begin{align}\label{alpha_mat_2}
 \langle\Phi_0|\hat{\alpha}_i^{\dag}\hat{\alpha}_j^{\dag}|\Phi_0\rangle=0
 \qquad
 \langle\Phi_0|\hat{\alpha}_i\hat{\alpha}_j|\Phi_0\rangle=0.
\end{align}
Equations (\ref{bog_tfm}), (\ref{alpha_mat_1}) and (\ref{alpha_mat_2})
give the normal and anomalous density matrices:
\begin{align}\label{dm_fm}
 \langle\Phi_0|\hat{a}_i^{\dag}\hat{a}_j|\Phi_0\rangle=
 \sum_{k=1}^M U_{ik}^*U_{jk}+\sum_{k=M+1}^{n_{\rm f}}V_{ik}V_{jk}^*
\end{align}
and
\begin{align}\label{dma_fm}
 \langle\Phi_0|\hat{a}_i^{\dag}\hat{a}_j^{\dag}|\Phi_0\rangle=
 \sum_{k=1}^M U_{ik}^*V_{jk}+\sum_{k=M+1}^{n_{\rm f}}V_{ik}U_{jk}^*.
\end{align}

\subsubsection{Time evolution}
What remains is to determine how the Kohn-Sham state evolves with time
in the time-dependent density function theory (TDDFT) version of the method.
The form of the ground state equations dictates
that of the time-dependent equations.
Thus if we assume that the matrices
$A$ and $B$ are now functions of time, then the time-dependent generalization
of the orbital equation (\ref{hm_bog_fm}) is
\begin{align}\label{hmt_bog_fm}
 i\frac{\partial}{\partial t}
 \Bigg(\begin{matrix}
  \vec{U}_j \\
  \vec{V}_j
 \end{matrix}\Bigg)
 =
 \Bigg(\begin{matrix}
  A(t) & B(t) \\
  B^{\dag}(t) & -A^*(t)
 \end{matrix}\Bigg)
 \Bigg(\begin{matrix}
  \vec{U}_j \\
  \vec{V}_j
 \end{matrix}\Bigg)
\end{align}
with the Kohn-Sham state given by
$|\Phi(t)\rangle=\prod_{i=1}^M\hat{\alpha}_i^{\dag}(t)|\bar{0}\rangle$. It is
easy to show that this state satisfies
\begin{align}
 i\frac{\partial |\Phi(t)\rangle}{\partial t}
 =\left(\sum_{ij}A_{ij}(t)\hat{a}_i^{\dag}\hat{a}_j
 +B_{ij}(t)\hat{a}_i^{\dag}\hat{a}_j^{\dag}
 -B_{ij}^*(t)\hat{a}_i\hat{a}_j\right)|\Phi(t)\rangle
\end{align}
with $|\Phi(t=0)\rangle=|\Phi_0\rangle$. Note that the number
of `occupied orbitals' $M$ remains constant with time.
Here we have assumed that the system has
evolved from its ground state.

\subsection{Kohn-Sham Hamiltonian for phonons}
The most general bosonic Kohn-Sham Hamiltonian of interest here has the form
\begin{align}\label{Hbs_ks}
 \hat{H}_s^{\rm b}=\sum_{ij}D_{ij}\hat{d}_i^{\dag}\hat{d}_j
 +\tfrac{1}{2}E_{ij}\hat{d}_i^{\dag}\hat{d}_j^{\dag}
 +\tfrac{1}{2}E_{ij}^*\hat{d}_i\hat{d}_j
 +\sum_i F_i\hat{d}_i^{\dag}+F_i^*\hat{d}_i,
\end{align}
where $D$ is Hermitian and contains the kinetic energy operator;
$E$ is a complex symmetric matrix and $F$ is a complex vector.
Note that $\hat{H}_{\rm KS}^{\rm b}$ contains the anomalous terms
$\hat{d}_i^{\dag}\hat{d}_j^{\dag}$
and $\hat{d}_i\hat{d}_j$.
In analogy with the fermionic case, this Hamiltonian can be diagonalized
\begin{align}\label{Hbs_bog}
 \hat{H}_s^{\rm b}=\sum_{i=1}^{n_{\rm b}}
  \omega_i\hat{\gamma}_i^{\dag}\hat{\gamma}_i+\Omega_0
\end{align}
with the Bogoliubov-type transformation
\begin{gather}\label{bog_bs}
\begin{split}
 \hat{\gamma}_j=\sum_{i=1}^{n_{\rm b}} W_{ij}^*\hat{d}_i+X_{ij}^*\hat{d}_i^{\dag}+y_j^* \\
 \hat{\gamma}_j^{\dag}=\sum_{i=1}^{n_{\rm b}} W_{ij}\hat{d}_i^{\dag}+X_{ij}\hat{d}_i+y_j,
\end{split}
\end{gather}
where $W$ and $X$ are complex matrices and $y$ is a complex vector. The index
$j$ runs from $1$ to twice the number of bosonic modes.
Requiring that
$\hat{\gamma}$ and $\hat{\gamma}^{\dag}$ obey bosonic algebra (the
complex numbers $y_j$ obviously commute with themselves and the operators,
maintaining the algebra) yields
\begin{align}
  W^{\dag}W-X^{\dag}X=I \label{wx_cond1} \\
  W^tX-X^tW=0. \label{wx_cond2}
\end{align}
After some manipulation, we
arrive at the Kohn-Sham-Bogoliubov equations for phonons:
\begin{align}\label{hm_bog_bs}
 \Bigg(\begin{matrix}
  D & -E \\
  E^* & -D^*
 \end{matrix}\Bigg)
 \Bigg(\begin{matrix}
  \vec{W}_j \\
  \vec{X}_j
 \end{matrix}\Bigg)
 =\omega_j
 \Bigg(\begin{matrix}
  \vec{W}_j \\
  \vec{X}_j
 \end{matrix}\Bigg).
\end{align}
The above equation can not be reduced to a symmetric eigenvalue problem because the
conditions (\ref{wx_cond1}) and (\ref{wx_cond2})
correspond to the indefinite inner product
$\eta={\rm diag}(1,\ldots,1,-1,\ldots,-1)$. Such matrix Hamiltonians can still
possess real eigenvalues \cite{Sudarshan1961,Mostafazadeh2002}.

\subsubsection{Real case}
We now consider the special case where the matrices $D$ and $E$ are real
symmetric and the vector $F$ is also real.
The bosonic Hamiltonian can be written as
\begin{align}\label{Hbs_ks_r}
 \hat{H}_s^{\rm b}=\sum_{ij}D_{ij}\hat{d}_i^{\dag}\hat{d}_j
 +\tfrac{1}{2}E_{ij}\left(\hat{d}_i^{\dag}\hat{d}_j^{\dag}
 +\hat{d}_i\hat{d}_j\right)
 +\sum_i F_i\left(\hat{d}_i^{\dag}+\hat{d}_i\right).
\end{align}
We now prove that under certain conditions, the matrix equation
(\ref{hm_bog_bs}) always possesses $n_{\rm b}$ solutions which satisfy
(\ref{wx_cond1}) and (\ref{wx_cond2}).
This requires the observation that if the vector $v\equiv(w,x)$ with
eigenvalue $\omega$ is a solution
to (\ref{hm_bog_bs}), then so is $\bar{v}\equiv(x,w)$ with eigenvalue
$-\omega$.

\begin{theorem}
Let
\begin{align*}
 H=
 \Bigg(\begin{matrix}
  D & -E \\
  E & -D
 \end{matrix}\Bigg),
\end{align*}
where $D$ and $E$ are real symmetric $n_{\rm b}\times n_{\rm b}$
matrices. Suppose $H$ has only real, non-degenerate eigenvalues
and every eigenvector $v$ satisfies $v^t\eta v\ne 0$. Then
\renewcommand{\theenumi}{\roman{enumi}}
\begin{enumerate}
\item The eigenvectors of $H$ may be chosen real.
\item The eigenvalue equation (\ref{hm_bog_bs})
has exactly $n_{\rm b}$ solutions
which satisfy the conditions (\ref{wx_cond1}) and (\ref{wx_cond2}).
\end{enumerate}
\end{theorem}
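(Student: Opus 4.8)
The plan is to build everything on two structural properties of $H$, beyond the pairing $v\mapsto\bar v$ (with $\bar v\equiv(x,w)$ for $v\equiv(w,x)$) already isolated in the text. First I would verify, by a one-line block computation, that with $\eta=\mathrm{diag}(1,\dots,1,-1,\dots,-1)$ the product $\eta H$ is real symmetric; equivalently $H^{\dagger}\eta=\eta H$, so $H$ is $\eta$-pseudo-Hermitian. Second, I would record that the involution $v\mapsto\bar v$ is \emph{linear}, sends $Hv=\omega v$ to $H\bar v=-\omega\bar v$, and reverses the sign of the bilinear form, $\overline v^{\,t}\eta\,\overline v=-v^{t}\eta v$.

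For (i) the argument is short: $H$ is real with, by hypothesis, only real and simple eigenvalues, so each eigenspace is one-dimensional and stable under entrywise conjugation; hence the conjugate $v^{*}$ of an eigenvector is proportional to $v$, and a constant phase rescaling makes the eigenvector real. The point of doing (i) first is that, for real eigenvectors, the bilinear form $v^{t}\eta v$ occurring in the hypothesis and in (\ref{wx_cond2}) equals the sesquilinear form $v^{\dagger}\eta v$ that the pseudo-Hermitian structure and (\ref{wx_cond1}) want; this identification is the glue of the whole proof.

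For (ii) I would proceed in steps. From $H^{\dagger}\eta=\eta H$ one gets the standard relation $(\omega_k-\omega_j)v_j^{\dagger}\eta v_k=0$, so eigenvectors with different eigenvalues are $\eta$-orthogonal and every $v^{\dagger}\eta v$ is real (and nonzero by hypothesis). Next, $0$ is not an eigenvalue, because simplicity would force $\bar v=\pm v$ and hence $v^{t}\eta v=0$; so the $2n_{\rm b}$ distinct eigenvalues split into $n_{\rm b}$ pairs $\{\omega_j,-\omega_j\}$ with $\omega_j\ne0$, and within a pair the sign-flip $\overline v^{\,\dagger}\eta\,\overline v=-v^{\dagger}\eta v$ shows exactly one member has positive $\eta$-norm. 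I would then take, from each pair, that positive-norm eigenvector, realise it via (i), and normalise it to $v_j^{\dagger}\eta v_j=1$, obtaining columns $v_j=(\vec W_j,\vec X_j)$, $j=1,\dots,n_{\rm b}$. Orthonormality $v_j^{\dagger}\eta v_k=\delta_{jk}$ reads, for real columns, $W^{\dagger}W-X^{\dagger}X=I$, which is (\ref{wx_cond1}); and $v_j^{\dagger}\eta\,\overline v_k=0$ for all $j,k$ — because $\overline v_k$ has eigenvalue $-\omega_k$ and $\omega_j=-\omega_k$ cannot occur ($j\ne k$ by distinctness, $j=k$ since $\omega_j\ne0$) — spells out as $(W^{t}X-X^{t}W)_{jk}=0$, which is (\ref{wx_cond2}). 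Finally I would note that exactly these $n_{\rm b}$ columns are admissible: the conjugate partner of any $v_j$ contributes $-1$ on the diagonal of (\ref{wx_cond1}), so no valid set is larger, and any set of fewer than $n_{\rm b}$ columns cannot reproduce $I_{n_{\rm b}}$.

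I do not expect a genuinely hard technical step; the work is all bookkeeping, and the one thing to watch is keeping the bilinear form $v^{t}\eta v$ and the sesquilinear form $v^{\dagger}\eta v$ straight, which is precisely why (i) must precede (ii). It is also worth flagging in the write-up where each hypothesis is consumed: non-degeneracy gives one-dimensional eigenspaces, the clean $\pm$ pairing, and automatic $\eta$-orthogonality inside the chosen set, while $v^{t}\eta v\ne0$ both excludes the eigenvalue $0$ and ensures each pair really splits into one positive- and one negative-norm vector, so that the normalisation $v_j^{\dagger}\eta v_j=1$ is attainable.
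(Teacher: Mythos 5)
Your proposal is correct and follows essentially the same route as the paper's own proof: both exploit the symmetry of $\eta H$ to obtain $\eta$-orthogonality of eigenvectors with distinct eigenvalues, and both use the involution $v\mapsto\bar v$ together with the sign flip of $v^t\eta v$ to select the $n_{\rm b}$ positive-norm representatives and then verify (\ref{wx_cond2}) from $v_i^t\eta\bar v_j=0$. You are somewhat more explicit than the paper (proving (i), ruling out the zero eigenvalue, and addressing the ``exactly $n_{\rm b}$'' count), but the underlying argument is the same.
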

\begin{proof}
The proof that the eigenvectors may be chosen real is straight-forward, so
we now prove the second statement.
Let $v_1$ and $v_2$ be two real eigenvectors of $H$ with corresponding
real eigenvalues $\omega_1$ and $\omega_2$.
Now $Hv_1=\omega_1v_1\Rightarrow \eta Hv_1=\omega_1\eta v_1$
and because $\eta H$ is symmetric we have
$v_1^t\eta H=\omega_1v_1^t\eta$ and thus
$v_1^t\eta Hv_2=\omega_1v_1^t\eta v_2$.
We also have that $Hv_2=\omega_2v_2$ and so
$v_1^t\eta Hv_2=\omega_2v_1^t\eta v_2$.
Subtracting and using the fact that $\omega_1\ne\omega_2$
yields $v_1^t\eta v_2=0$. This is equivalent to the off-diagonal part
of condition (\ref{wx_cond1}). Consider an eigenvector $v=(w,x)$ of $H$.
Now $v^t\eta v\ne 0$, thus if $v^t\eta v<0$ then choose the other eigenvector
$\bar{v}$ for which $\bar{v}^t\eta \bar{v}>0$.
Such an eigenvector can be rescaled arbitrarily
to ensure $v^t\eta v=1$. This corresponds to the diagonal part of
(\ref{wx_cond1}) but is valid for only half of the total number of eigenvectors
since rescaling cannot change the sign of $v^t\eta v$.
These remaining vectors are discarded.
Condition (\ref{wx_cond2}) is trivially satisfied for the diagonal.
For any two vectors $v_i$ and $v_j$ suppose $v_j\ne\bar{v}_i$ then
$\bar{v}_j=v_k$ for some other $k$. The off-diagonal part of condition
(\ref{wx_cond1}) is satisfied for all vectors, thus
$v_i^t\eta v_k=v_i^t\eta \bar{v}_j=0$.
If $v_j=\bar{v}_i$ then one of these vectors will have been discarded.
\end{proof}
The theorem is easily extended to the case where $H$ has degenerate
eigenvalues.
There is no guarantee that the eigenvalues of $H$ are real since the matrix
is not Hermitian. We therefore need additional restrictions on the matrices
$D$ and $E$ to ensure this; the following conditions are sufficient but not
necessary. We use the notation $P\succ 0$ to mean that the symmetric matrix
$P$ is positive definite, and that $P\succ Q$ implies $P-Q\succ 0$.

\begin{theorem}\label{th_LH}
Let $D\succ 0$, and suppose that $E$ is a symmetric matrix. If any of the
following are true then $H$ has real eigenvalues:
\renewcommand{\theenumi}{\roman{enumi}}
\begin{enumerate}
\item $D\succ E D^{-1}E$.\label{pos1}
\item The largest eigenvalue of $(ED^{-1})^2$ is less than $1$.\label{pos2}
\item $z^{\dag}Dz>|z^{\dag}Ez|$ for all $z\in\mathbb{C}^{n_{\rm b}}$.\label{pos3}
\item $E\succ 0$ and $D\succ E$.\label{pos4}
\item $E\succ 0$ and $D^p\succ E^p$, where $p\ge 1$.\label{pos5}
\item $D^2\succ E^2$.\label{pos6}
\end{enumerate}
Furthermore, if all eigenvalues are non-zero then all eigenvectors
satisfy $v^t\eta v\ne 0$.
\end{theorem}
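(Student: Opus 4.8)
The plan is to funnel all six hypotheses into the single ``master'' condition \ref{pos1} and then exploit pseudo-Hermiticity. First I would record the elementary fact that $\eta H=\left(\begin{smallmatrix} D & -E \\ -E & D\end{smallmatrix}\right)$ is real symmetric, equivalently $H^{t}\eta=\eta H$, so that $H$ is $\eta$-pseudo-Hermitian. Since $D\succ 0$, the Schur-complement criterion gives $\eta H\succ 0\iff D-ED^{-1}E\succ 0$, which is precisely \ref{pos1}. Assuming \ref{pos1}, put $\rho:=\eta H\succ 0$; then $H^{t}\rho=H^{t}\eta H=\rho H$, hence $\rho^{1/2}H\rho^{-1/2}$ is real symmetric and $H$, being similar to a symmetric matrix, has real eigenvalues. (This is the concrete mechanism behind the pseudo-Hermiticity results of Refs.~\cite{Sudarshan1961,Mostafazadeh2002}.) It therefore remains to prove that each of \ref{pos2}--\ref{pos6} implies \ref{pos1}.

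For this I would introduce the real symmetric matrix $S:=D^{-1/2}ED^{-1/2}$ and note that \ref{pos1} is equivalent to $ED^{-1}E\prec D$, equivalently $S^{2}\prec I$, equivalently $\|S\|<1$. Then \ref{pos2} holds iff the largest eigenvalue of $(ED^{-1})^{2}$—which is similar to the positive-semidefinite $S^{2}$, hence has non-negative spectrum—is less than $1$, i.e.\ iff $\|S\|<1$; and \ref{pos3}, after the substitution $z=D^{-1/2}w$, reads $\|w\|^{2}>|w^{\dag}Sw|$ for all $w\neq 0$, which by compactness of the unit sphere is again $\|S\|<1$. Thus \ref{pos2} and \ref{pos3} are in fact equivalent to \ref{pos1}. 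For \ref{pos4}: from $D\succ E\succ 0$ one gets $E^{-1}\succ D^{-1}$, hence $E-ED^{-1}E=E(E^{-1}-D^{-1})E\succ 0$; adding this to $D-E\succ 0$ yields \ref{pos1}. And \ref{pos5} reduces to \ref{pos4} via the L\"{o}wner--Heinz inequality: since $D^{p}-E^{p}\succeq\epsilon I$ for some $\epsilon>0$, applying the operator-monotone map $t\mapsto t^{1/p}$ gives $D\succeq(E^{p}+\epsilon I)^{1/p}\succ E$, so $D\succ E$.

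The step needing a genuine idea is \ref{pos6}, precisely because squaring is not operator monotone, so $D^{2}\succ E^{2}$ does not give $D\succ E$ directly. Here I would conjugate $D^{2}\succ E^{2}$ by $D^{-1/2}$ to obtain $D\succ SDS$, and then test this against a real unit eigenvector $u$ of $S$ with eigenvalue $\sigma$: the relations $0<u^{t}(D-SDS)u=(1-\sigma^{2})\,u^{t}Du$ and $u^{t}Du>0$ force $\sigma^{2}<1$, hence $\|S\|<1$ and \ref{pos1} follows. (This also explains why \ref{pos6} is strictly weaker than \ref{pos4}: $E$ need not be definite.) Finally, for the closing sentence, note that all of \ref{pos1}--\ref{pos6} have been shown to imply $\eta H\succ 0$; if in addition every eigenvalue $\omega$ is non-zero, choose an eigenvector $v$ real (possible because $\omega$ and $H$ are real), and then $0<v^{t}(\eta H)v=\omega\,v^{t}\eta v$ gives $v^{t}\eta v=\omega^{-1}v^{t}(\eta H)v\neq 0$.
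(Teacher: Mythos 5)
Your proposal is correct, and while it shares the paper's skeleton---funnel all six hypotheses into $\eta H\succ 0$---it fills in the steps differently enough to be worth comparing. First, to get real eigenvalues from $\eta H\succ 0$ you use the similarity $\rho^{1/2}H\rho^{-1/2}$ with $\rho=\eta H$, exploiting $H^{t}\eta=\eta H$; the paper instead reads reality directly off the quadratic-form identity $v^{\dag}\eta Hv=\omega\,v^{\dag}\eta v$. Your route is slightly longer but yields diagonalizability for free. Second, where the paper cites Horn--Johnson for conditions (i) and (ii) and Fitzgerald--Horn for condition (iii), you give self-contained arguments via the Schur complement and the matrix $S=D^{-1/2}ED^{-1/2}$, and in doing so you establish the stronger fact that conditions (i), (ii) and (iii) are mutually \emph{equivalent} (each being $\|S\|<1$), which sharpens the statement. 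Third, and most substantively, for condition (vi) the paper sets $e=(E^{2})^{1/2}$, invokes the L\"{o}wner--Heinz theorem to get $D\succ e$, and then uses $z^{\dag}ez\ge|z^{\dag}Ez|$ to reduce to condition (iii); you instead conjugate $D^{2}\succ E^{2}$ to $D\succ SDS$ and test against eigenvectors of $S$ to conclude $\|S\|<1$ directly---a more elementary argument that avoids operator monotonicity altogether for this case (you still need L\"{o}wner--Heinz for condition (v), as does the paper). Your handling of the final claim matches the paper's level of rigor; for a genuinely complex eigenvector the cleaner statement is that $0<v^{\dag}\eta Hv=\omega\,v^{\dag}\eta v$ gives $v^{\dag}\eta v\ne 0$, with $v^{t}\eta v=v^{\dag}\eta v$ once $v$ is chosen real, which is legitimate in the non-degenerate setting of the preceding theorem.
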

\begin{proof}
Let $\omega$ and $v$ be an eigenvalue and eigenvector of $H$.
The matrix
\begin{align*}
 \eta H=
 \Bigg(\begin{matrix}
  D & -E \\
  -E & D
 \end{matrix}\Bigg)
\end{align*}
is symmetric, therefore both sides of $v^{\dag}\eta H v=\omega v^{\dag}\eta v$
are real. The only requirement for $\omega$ to be real is that
$v^{\dag}\eta H v$ be non-zero, which is ensured so long as $\eta H\succ 0$.
This follows from either of the conditions \ref{pos1} or \ref{pos2}
(see, for example, Ref. \cite{Horn1990}).
Condition \ref{pos3} follows from Theorem 2.1 in Ref \cite{Fitzgerald1977}
and \ref{pos4} follows immediately.
The L\"{o}wner-Heinz theorem \cite{Zhan2002}
reduces condition \ref{pos5} to \ref{pos4}.
Finally, suppose $D^2\succ E^2$ where $E$ may not be positive definite.
$E$ is symmetric therefore $E^2\succ 0$ which means that there exists a symmetric matrix
$e\succ 0$ such that $e^2=E^2$. The L\"{o}wner-Heinz theorem
implies that $D\succ e$, therefore $z^{\dag}Dz > z^{\dag}ez$ for all complex
vectors $z\in\mathbb{C}^{n_{\rm b}}$. $E$ and $e$ can be simultaneously
diagonalized and for each eigenvalue $\lambda$ of $E$
there is a corresponding positive eigenvalue $|\lambda|$ of $e$.
In this eigenvector basis, it is easy to see
that $z^{\dag}ez\ge|z^{\dag}Ez|$ for all $z$ which in turn gives
condition \ref{pos3}, thereby proving \ref{pos6}.
In fact, all of the above conditions imply \cite{Fitzgerald1977} that
$\eta H\succ 0$. Thus if all eigenvalues $\omega\ne 0$ then $v^t\eta v\ne 0$.
\end{proof}

\begin{corollary}\label{cor_psd}
Let $D_0\succ 0$ and $E\succeq 0$ (positive semi-definite) then $D=D_0+E$ yields
real eigenvalues for $H$.
\end{corollary}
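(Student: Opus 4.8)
The plan is to reduce the corollary to condition \ref{pos3} of Theorem \ref{th_LH}. First I would verify the standing hypotheses of that theorem: since $D=D_0+E$ with $D_0\succ 0$ and $E\succeq 0$, for every nonzero $z\in\mathbb{C}^{n_{\rm b}}$ we have $z^{\dag}Dz=z^{\dag}D_0z+z^{\dag}Ez>0$, so $D\succ 0$, and $E=D-D_0$ is symmetric by assumption. Thus Theorem \ref{th_LH} is applicable provided one of its six conditions holds.

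The key observation is that positive semi-definiteness of $E$ removes the absolute value: $z^{\dag}Ez\ge 0$ for all $z$, hence $|z^{\dag}Ez|=z^{\dag}Ez$. Therefore, for every nonzero $z$,
\begin{align*}
 z^{\dag}Dz=z^{\dag}D_0z+z^{\dag}Ez>z^{\dag}Ez=|z^{\dag}Ez|,
\end{align*}
the strict inequality coming from $z^{\dag}D_0z>0$. This is precisely condition \ref{pos3}, so Theorem \ref{th_LH} applies and $H$ has real eigenvalues.

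The one point to be careful about — and the only real ``obstacle'' — is that the naive first choice, condition \ref{pos4}, does not apply: although $D=D_0+E\succ E$ holds, condition \ref{pos4} additionally demands $E\succ 0$ strictly, whereas here $E$ is merely semi-definite. Condition \ref{pos3} is exactly the variant that tolerates a semi-definite $E$, which is why I would route the argument through it. As an alternative, one could instead perturb, replacing $E$ by $E+\epsilon I\succ 0$ and $D$ by $D+\epsilon I$, apply condition \ref{pos4} to each, and let $\epsilon\to 0^+$ using continuity of the spectrum; but this requires a small argument that a real spectrum persists in the limit, so invoking \ref{pos3} directly is cleaner and I would present only that.
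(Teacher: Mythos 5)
Your proof is correct and is evidently the derivation the authors intend: the paper states Corollary \ref{cor_psd} without proof, and routing it through condition \ref{pos3} of Theorem \ref{th_LH} via $|z^{\dag}Ez|=z^{\dag}Ez\ge 0$ and $z^{\dag}Dz=z^{\dag}D_0z+z^{\dag}Ez>z^{\dag}Ez$ for nonzero $z$ is the natural (and essentially only direct) way to obtain it from the listed conditions. Your remark that condition \ref{pos4} fails for merely semi-definite $E$ is also accurate and justifies the choice of \ref{pos3}.
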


\begin{theorem}
Let $D$ be an arbitrary real symmetric matrix and let $f$ be a real function
such that $|f(x)|<|x|$ for all $x\in\mathbb{R}$, then by setting $E=f(D)$
(in the usual `function of matrices' sense \cite{Rinehart1955})
$H$ has real eigenvalues and every eigenvector $v$ satisfies $v^t\eta v\ne 0$.
\end{theorem}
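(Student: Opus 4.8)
The plan is to exploit that $E=f(D)$ commutes with $D$, which makes the pair $(H,\eta)$ decompose into $2\times2$ pieces that can be diagonalised by hand. First I would take the spectral decomposition $D=Q\Lambda Q^{t}$ with $Q$ real orthogonal and $\Lambda=\mathrm{diag}(d_1,\dots,d_{n_{\rm b}})$, so that $E=f(D)=Q\,f(\Lambda)\,Q^{t}$ with $f(\Lambda)=\mathrm{diag}(f(d_1),\dots,f(d_{n_{\rm b}}))$. Conjugating $H$ by the real orthogonal matrix $\mathrm{diag}(Q,Q)$ leaves $\eta$ unchanged and gives
\[
 \mathrm{diag}(Q,Q)^{t}\,H\,\mathrm{diag}(Q,Q)=
 \Bigg(\begin{matrix} \Lambda & -f(\Lambda) \\ f(\Lambda) & -\Lambda \end{matrix}\Bigg).
\]
A permutation that places the $k$-th ``upper'' coordinate next to the $k$-th ``lower'' coordinate then simultaneously turns this into $\bigoplus_{k} H_k$ with $2\times2$ blocks $H_k=\Bigl(\begin{smallmatrix} d_k & -f(d_k) \\ f(d_k) & -d_k \end{smallmatrix}\Bigr)$, and turns $\eta$ into $\bigoplus_k \eta_2$ with $\eta_2=\mathrm{diag}(1,-1)$. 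It therefore suffices to analyse each pair $(H_k,\eta_2)$ separately.

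The characteristic polynomial of $H_k$ is $\lambda^{2}-\bigl(d_k^{2}-f(d_k)^{2}\bigr)$, so its eigenvalues are $\pm\sqrt{d_k^{2}-f(d_k)^{2}}$. The hypothesis $|f(x)|<|x|$ cannot hold at $x=0$ and so tacitly forces $D$ to be nonsingular; with $d_k\ne0$ it gives $d_k^{2}-f(d_k)^{2}>0$, so the two eigenvalues of each $H_k$ are real and nonzero, and running over $k$ shows $H$ has $2n_{\rm b}$ real, nonzero eigenvalues. A coordinate-free route to reality that I would also mention is that commutativity of $D$ and $E$ gives $H^{2}=\bigl(D^{2}-E^{2}\bigr)\oplus\bigl(D^{2}-E^{2}\bigr)$ with $D^{2}-E^{2}=g(D)$, $g(x)=x^{2}-f(x)^{2}>0$ on $\mathrm{spec}(D)$; hence $H^{2}\succ0$ and every eigenvalue $\omega$ of $H$ obeys $\omega^{2}\in\mathrm{spec}\bigl(g(D)\bigr)\subset(0,\infty)$ (note this recovers condition (vi) of Theorem~\ref{th_LH}, but now without requiring $D\succ0$). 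Since each $H_k$ is a real matrix with real, distinct eigenvalues its eigenvectors are real, and undoing the real transformations shows the eigenvectors of $H$ may be chosen real.

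For the statement about $v^{t}\eta v$ I would argue block by block. If $f(d_k)\ne0$, the eigenvector of $H_k$ for $\omega_k=\sqrt{d_k^{2}-f(d_k)^{2}}$ is proportional to $\bigl(f(d_k),\,d_k-\omega_k\bigr)$, whose $\eta_2$-``norm'' is proportional to $f(d_k)^{2}-(d_k-\omega_k)^{2}=2\omega_k(d_k-\omega_k)$; this is nonzero because $\omega_k\ne0$ and $0<\omega_k<|d_k|$ rules out $\omega_k=d_k$. The eigenvector for $-\omega_k$ behaves the same way, and if $f(d_k)=0$ the block is $\mathrm{diag}(d_k,-d_k)$ whose eigenvectors have $\eta_2$-norm $\pm1$. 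Thus every block-adapted eigenvector has nonzero $\eta$-norm. When the eigenvalues of $H$ are non-degenerate these are, up to scaling, all the eigenvectors, so the claim follows directly; when some $\omega_k$ coincide one restricts to this block-adapted basis of each eigenspace, exactly as in the extension of Theorem 1 to degenerate eigenvalues. Together with the previous paragraph this places $H$ in the situation covered by Theorem 1 (or its degenerate version), so the Kohn-Sham-Bogoliubov equations (\ref{hm_bog_bs}) admit $n_{\rm b}$ valid solutions.

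The analysis is elementary once the simultaneous diagonalisation of $D$ and $E=f(D)$ is used, and the reality of the spectrum — the part one might worry about — drops out immediately from $H^{2}\succ0$. The two points that need care are: recognising that the hypothesis silently makes $D$ invertible, which is what keeps $g(D)\succ0$ and all block eigenvalues away from zero; and the bookkeeping when $H$ has repeated eigenvalues, where ``every eigenvector'' must be understood — as elsewhere in the paper — up to the choice of basis within each eigenspace. I expect this last point, rather than any genuine analytic difficulty, to be the main thing to state carefully.
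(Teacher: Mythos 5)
Your proof is correct, and its core---that $E=f(D)$ commutes with $D$, so that $H^2$ reduces to two copies of $D^2-E^2=g(D)$ with $g(x)=x^2-f(x)^2>0$ on the spectrum of $D$, whence every eigenvalue of $H$ squares to a positive number---is exactly the paper's argument; you even state it verbatim as your ``coordinate-free route''. Where you genuinely diverge is the claim $v^t\eta v\ne 0$. The paper disposes of this by citing Theorem \ref{th_LH}, but the mechanism there ($v^{\dag}\eta Hv=\omega\,v^{\dag}\eta v$ with $\eta H\succ 0$) relied on $D\succ 0$, which is not assumed in the present theorem: for indefinite $D$ the matrix $\eta H$ is itself indefinite, so that global argument does not carry over as stated. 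Your explicit reduction to $2\times 2$ blocks repairs this---$\eta H$ restricted to each block \emph{is} definite (its eigenvalues $d_k\pm f(d_k)$ share the sign of $d_k$), which is equivalent to your direct computation $v^t\eta_2 v\propto 2\omega_k(d_k-\omega_k)\ne 0$---and you are also right to flag that for cross-block degeneracies the conclusion holds only for a suitably chosen block-adapted eigenbasis, a caveat the paper leaves implicit in its remark about extending Theorem 1 to degenerate eigenvalues. Your observation that $|f(x)|<|x|$ cannot hold at $x=0$, so the hypothesis tacitly excludes $0$ from the spectrum of $D$, is precisely what the paper needs for $D^2\succ E^2$ to be strict. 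In short: same key lemma for the reality of the spectrum, but your block-wise treatment buys a self-contained and strictly correct proof of the $\eta$-norm claim that the paper's one-line citation does not quite deliver.
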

\begin{proof}
We first note that
\begin{align*}
 H^2=
 \Bigg(\begin{matrix}
   D^2-E^2 & [E,D] \\
   [E,D] & D^2-E^2
 \end{matrix}\Bigg).
\end{align*}
It is obvious for any $E=f(D)$ that $[E,D]=0$ and $D^2\succ E^2$.
Therefore all the eigenvalues of $H^2$ are real and positive.
We conclude that the eigenvalues of $H$ are real and non-zero,
thus $v^t\eta v\ne 0$ follows from Theorem \ref{th_LH}.
\end{proof}

\begin{theorem}
Let $D$ be a real symmetric matrix which has no zero eigenvalues and which
commutes with all the
matrices in a group representation $S=\{S_i\}$.
Further suppose that any degenerate eigenvalues of $D$ correspond only to
irreducible representations of $S$ (i.e. there are no accidental degeneracies).
If $E$ is a real symmetric matrix which also commutes with all the matrices
in $S$ then there exists a $\xi>0$ such that if $E\rightarrow \xi E$ then $H(\xi)$
has real eigenvalues.
\end{theorem}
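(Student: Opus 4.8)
The plan is to treat $\xi$ as a perturbation parameter and to show that the real spectrum of $H(0)$ cannot leave the real axis for $\xi$ small. Here $H(\xi)=\bigl(\begin{smallmatrix} D & -\xi E\\ \xi E & -D\end{smallmatrix}\bigr)$ is a real matrix depending polynomially on $\xi$, with $H(0)=D\oplus(-D)$, whose spectrum is $\sigma(D)\cup(-\sigma(D))$; this is real and, because $D$ has no zero eigenvalue, bounded away from the origin. Since $H(\xi)$ is real its non-real eigenvalues occur in complex-conjugate pairs, so any \emph{simple} eigenvalue of $H(0)$ continues to an analytic, and necessarily real, eigenvalue branch for small $\xi$; the question therefore reduces to the behaviour of the \emph{multiple} eigenvalues of $H(0)$. (As a cross-check one may instead pass to $H^2$, whose spectrum is that of $K(\xi):=D^2-\xi^2E^2+\xi[E,D]$ counted twice; here $K(0)=D^2\succ0$ is exactly where the no-zero-eigenvalue hypothesis bites, and keeping $\sigma(K(\xi))\subset(0,\infty)$ for small $\xi$ yields real eigenvalues of $H(\xi)$ by the argument already used in Theorem~\ref{th_LH}.)

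The representation-theoretic hypotheses enter through Schur's lemma. Since $D$ and $E$ commute with every $S_i$, so does $H(\xi)$ with the direct-sum representation $S\oplus S$ on the doubled space; the isotypic decomposition splits $H(\xi)$ into a direct sum of small matrices, one per isotype. The no-accidental-degeneracy hypothesis makes each eigenspace $V_\lambda$ of $D$ an irreducible $S$-module, so that $E$, being $S$-equivariant and real symmetric, acts on $V_\lambda$ as a real scalar; hence $\xi E$ does not split the degeneracies of $D$, it only shifts them. Let $\mu\ne0$ be a multiple eigenvalue of $H(0)$ with $-\mu\notin\sigma(D)$: its eigenspace lies entirely in one copy of the doubled space, and the indefinite form $\eta=\mathrm{diag}(1,\dots,1,-1,\dots,-1)$ restricts there to a \emph{definite} form. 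The associated analytically varying spectral subspace $U(\xi)$ then stays $\eta$-nondegenerate for small $\xi$, so $H(\xi)|_{U(\xi)}$ is self-adjoint for a definite inner product and the whole eigenvalue cluster born from $\mu$ is real; indeed even the first-order term $P_\mu\bigl(\begin{smallmatrix}0&-E\\E&0\end{smallmatrix}\bigr)P_\mu$ vanishes here.

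The main obstacle is the opposite case, where both $\mu$ and $-\mu$ lie in $\sigma(D)$. Then the degenerate eigenspace of $H(0)$ is $V_\mu\oplus V_{-\mu}$, straddling the two copies, $\eta$ restricts to an \emph{indefinite} form, and the first-order effective perturbation is the block $\bigl(\begin{smallmatrix}0 & -F\\ F^{t} & 0\end{smallmatrix}\bigr)$ with $F=P_\mu E P_{-\mu}$, whose eigenvalues are $\pm i$ times the singular values of $F$ --- purely imaginary, and hence a genuine threat to reality, unless $F=0$. The crux is therefore to prove $F=0$: as $F$ is an $S$-intertwiner from $V_{-\mu}$ to $V_\mu$, Schur's lemma forces $F=0$ provided these irreducible modules are inequivalent. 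This is secured once the no-accidental-degeneracy requirement is also read as applying to $D^2$ (equivalently $\sigma(D)\cap(-\sigma(D))=\emptyset$), and is automatic in the physically relevant case of a sign-definite $D$. Granting it, for $0<\xi<\xi_0$ every eigenvalue of $H(\xi)$ lies in a cluster around a real eigenvalue of $H(0)$ on which $\eta$ is definite, and is therefore real. I expect this coincidence case --- excluding the nontrivial intertwiner $F$, and with it the imaginary first-order splitting --- to be the single point requiring genuine care; everything else is routine analytic perturbation theory.
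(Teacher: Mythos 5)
Your route is genuinely different from the paper's. The paper never perturbs $H(\xi)$ directly: it passes to $Q(\xi)=D^2-\xi^2E^2+\xi[E,D]$, whose spectrum is that of $H(\xi)^2$, notes that $Q(\xi)$ commutes with all $S_i$ so that by Schur's lemma ``the degeneracies of $Q(\xi)$ are not lost,'' and then invokes continuity of roots plus the conjugate-root theorem to keep the eigenvalues of $Q(\xi)$ real and positive for small $\xi$. Your direct degenerate perturbation theory on $H(\xi)$, organized by the signature of $\eta$ on each unperturbed eigenspace, buys a sharper diagnosis: it isolates exactly where reality can fail, namely the skew first-order block $\bigl(\begin{smallmatrix}0&-F\\F^t&0\end{smallmatrix}\bigr)$ with $F=P_{V_\mu}EP_{V_{-\mu}}$ arising when both $\mu$ and $-\mu$ lie in $\sigma(D)$.

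More importantly, the obstruction you flag is not a pedantic worry: the theorem as stated is false without the extra hypothesis you propose, and the paper's own proof breaks precisely at the point you identify. Take $S$ the trivial representation (or any representation for which $V_\mu\cong V_{-\mu}$), $D=\mathrm{diag}(1,-1)$ and $E=\bigl(\begin{smallmatrix}0&1\\1&0\end{smallmatrix}\bigr)$: every hypothesis holds ($D$ has simple, nonzero eigenvalues, so the no-accidental-degeneracy condition is vacuous), yet $H(\xi)$ decouples into the two blocks $\bigl(\begin{smallmatrix}1&-\xi\\ \xi&1\end{smallmatrix}\bigr)$ and $\bigl(\begin{smallmatrix}-1&-\xi\\ \xi&-1\end{smallmatrix}\bigr)$ and has eigenvalues $\pm1\pm i\xi$, complex for every $\xi>0$. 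The hidden flaw in the paper's argument is that the hypothesis forbids accidental degeneracies of $D$ but not of $D^2$: when $\pm\mu\in\sigma(D)$, the eigenvalue $\mu^2$ of $Q(0)=D^2$ has eigenspace $V_\mu\oplus V_{-\mu}$, which is not a single irreducible module, so Schur's lemma does not protect that degeneracy and it can---and in the example does---split into a conjugate pair. Your proposed repair (require $\sigma(D)\cap(-\sigma(D))=\emptyset$, or more generally $V_\mu\not\cong V_{-\mu}$ as $S$-modules so that Schur forces $F=0$; both automatic for sign-definite $D$) is exactly what is needed, and with it your argument closes: every eigenvalue cluster of $H(\xi)$ then sits in an $\eta$-definite analytic spectral subspace of an $\eta$-self-adjoint matrix and must remain real. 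The only remaining looseness in your write-up is that the routine steps (reality of simple eigenvalues of a real matrix, persistence of $\eta$-definiteness of the spectral subspaces) are asserted rather than spelled out; they are indeed standard.
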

\begin{proof}
From the properties of the determinant applied to blocked matrices, the
eigenvalues of $H^2$ are also the eigenvalues of
$Q\coloneqq D^2-E^2+[E,D]$.
Since $[D,S_i]=[E,S_i]=0$ for all $i$ then $D^2$, $E^2$, $[E,D]$ and thus
$Q(\xi)$ also commute with $S_i$.
Schur's lemma applies equally well to non-Hermitian matrices therefore the
degeneracies of $Q(\xi)$ are not lost as $\xi$ increases.
We also note that the roots of a polynomial depend continuously on its
coefficients and hence the eigenvalues of $Q(\xi)$ depend continuously on $\xi$.
From the conjugate root theorem, if $Q(\xi)$ has a complex eigenvalue then
it must also have its complex conjugate as an eigenvalue.
For sufficiently small $\xi>0$ the eigenvalues of $D^2$ cannot become
complex because this would require lifting of a degeneracy. Also
because of continuity and because
$D^2$ has strictly positive eigenvalues, a sufficiently small
$\xi>0$ will keep them positive. Hence the eigenvalues of $H(\xi)$ are real.
\end{proof}

Once these equations are solved, the vector $y$ is determined from
\begin{align}\label{hm_bog_bs2}
 y=\omega^{-1}\left(W^t-X^t\right)F,
\end{align}
where $\omega={\rm diag}(\omega_1,\ldots,\omega_{n_{\rm b}})$.
The constant term in (\ref{Hbs_bog}) given by
\begin{align}
 \Omega_0=-{\rm tr}\left(X\omega X^{\dag}\right)-y^{\dag}\omega y.
\end{align}

\subsubsection{Existence and nature of the vacuum state}
We now show that the state which is annihilated by all the $\hat{\gamma}_i$
exists. Let
\begin{align}
 \hat{w}_j\coloneqq\sum_{i=1}^{n_{\rm b}} W_{ij}^*\hat{d}_i \qquad
 \hat{x}_j^{\dag}\coloneqq\sum_{i=1}^{n_{\rm b}} X_{ij}^*\hat{d}_i^{\dag}
\end{align}
then
\begin{align}
 \left[\hat{w}_j,\hat{x}_j^{\dag}\right]=\sum_{i=1}^{n_{\rm b}} W_{ij}^*X_{ij}^*
 \eqqcolon \tau_j.
\end{align}
Now consider the eigenvalue equation
\begin{align}\label{eig_bs}
 \left(\hat{w}_j+\hat{x}_j^{\dag}\right)|\bar{0}_j\rangle=-y_j^*|\bar{0}_j\rangle.
\end{align}
Using the ansatz
\begin{align}\label{coh_bs}
 |\bar{0}_j\rangle=\sum_{n=0}^{\infty}
 \frac{\kappa_n^j}{n!}(\hat{x}_j^{\dag})^n|0\rangle,
\end{align}
we obtain a recurrence relation
\begin{align}
 \kappa_n^j=\left[-y_j^*\kappa_{n-1}^j-(n-1)\kappa_{n-2}^j\right]/\tau_j
\end{align}
with $y_j^*\kappa_0^j=-\kappa_1^j\tau_j$ and $\kappa_0^j$ chosen so that
$\langle\bar{0}_j|\bar{0}_j\rangle=1$. Note that if
$\kappa_n^j=1$ for all $n$ then (\ref{coh_bs}) is a coherent state.
The vacuum state
\begin{align}\label{gs_bs}
 |\bar{0}\rangle=\zeta\hat{S}\bigotimes_{j=1}^{n_{\rm b}}|\bar{0}_j\rangle,
\end{align}
where $\zeta$ is a normalization constant and $\hat{S}$ is the symmetrizing operator,
is annihilated by all $\hat{\gamma}_j$ and, because $\omega_j>0$ for all $j$,
is also the bosonic Kohn-Sham
ground state,
which is the lowest energy Fock space eigenstate of (\ref{Hbs_ks}), as required.

\subsubsection{Phononic observables and time evolution}
To make the theory useful, observables which are products of the original
$c_i$ and $c_i^{\dag}$ operators have to be computed. After some straight-forward
algebra one finds that linear operators may be evaluated using
\begin{align}\label{bs_obs1}
 Y_i\coloneqq\langle\bar{0}|\hat{d}_i|\bar{0}\rangle=
 \langle\bar{0}|\hat{d}_i^{\dag}|\bar{0}\rangle^*=
 \sum_{j=1}^{n_{\rm b}} X_{ij}^*y_j-W_{ij}y_j^*.
\end{align}
Observables which are quadratic are more complicated:
\begin{align}\label{bs_obs2}
\begin{split}
 \langle\bar{0}|\hat{d}_i^{\dag}\hat{d}_j|\bar{0}\rangle=
 Y_i^*Y_j+\left(XX^{\dag}\right)_{ij} \qquad
 \langle\bar{0}|\hat{d}_i\hat{d}_j^{\dag}|\bar{0}\rangle=
 Y_iY_j^*+\left(WW^{\dag}\right)_{ij} \\
 \langle\bar{0}|\hat{d}_i^{\dag}\hat{d}_j^{\dag}|\bar{0}\rangle=
 Y_i^*Y_j^*-\left(XW^{\dag}\right)_{ij} \qquad
 \langle\bar{0}|\hat{d}_i\hat{d}_j|\bar{0}\rangle=
 Y_iY_j-\left(WX^{\dag}\right)_{ij}.
\end{split}
\end{align}
The extension to the time-dependent case follows the same procedure as that for
fermions, namely that the matrices and vector $D$, $E$ and $F$
in (\ref{Hbs_ks}) become time-dependent as, consequently,
do $\hat{\gamma}_i^{\dag}$ and $|\bar{0}\rangle$ after solving the equation of
motion
\begin{align}\label{hmt_bog_bs1}
 i\frac{\partial}{\partial t}
 \Bigg(\begin{matrix}
  \vec{W}_j \\
  \vec{X}_j
 \end{matrix}\Bigg)=
 \Bigg(\begin{matrix}
  D(t) & -E(t) \\
  E^*(t) & -D^*(t)
 \end{matrix}\Bigg)
 \Bigg(\begin{matrix}
  \vec{W}_j \\
  \vec{X}_j
 \end{matrix}\Bigg).
\end{align}
This time evolution is not unitary but rather pseudo-unitary
\cite{Mostafazadeh2002b} and will not preserve ordinary
vector lengths in general but will preserve the indefinite inner product.
The vector $y$ can be determined analogously from
\begin{align}\label{hmt_bog_bs2}
 i\frac{\partial y}{\partial t}=\left(W^t(t)-X^t(t)\right)F(t).
\end{align}
Evolving (\ref{hmt_bog_bs1}) and (\ref{hmt_bog_bs2}) in time
is equivalent to doing the same for the second-quantized Hamiltonian and
the Fock space state vector:
\begin{align}
 i\frac{\partial |\Psi(t)\rangle}{\partial t}
 =\left(\sum_{ij}D_{ij}(t)\hat{d}_i^{\dag}\hat{d}_j
 +\tfrac{1}{2}E_{ij}(t)\hat{d}_i^{\dag}\hat{d}_j^{\dag}
 +\tfrac{1}{2}E_{ij}^*(t)\hat{d}_i\hat{d}_j
 +\sum_i F_i(t)\hat{d}_i^{\dag}+F_i^*(t)\hat{d}_i\right)|\Psi(t)\rangle.
\end{align}

\subsubsection{Numerical aspects}
In order to determine the phonon ground state or perform
time-evolution with (\ref{hmt_bog_bs1}) for real systems,
we require a numerical algorithm for finding the eigenvalues and eigenvectors
of (\ref{hm_bog_bs}). This is not a symmetric or Hermitian problem and
while a general non-symmetric eigenvalue solver could be employed,
a simple modification of Jacobi's method can be used to diagonalize the
matrix efficiently.

Let $G(i,j,\theta)$ be a Givens rotation matrix, i.e. for $i<j$,
$G_{kk}=1$ for $k\ne i,j$, $G_{kk}=\cos\theta$ for $k=i,j$,
$G_{ji}=-G_{ij}=\sin\theta$ and zero otherwise. Further define the
hyperbolic Givens rotation, $G^{\rm h}(i,j,\theta)$, which is the same except that
$G_{kk}=\cosh\theta$ and $G_{ji}=G_{ij}=\sinh\theta$.
The Givens and hyperbolic Givens rotations can be combined to diagonalize the
matrix in (\ref{hm_bog_bs}).
For $i<j$ where $1<j\le 2N_{\rm b}$ we can define a combined Givens
rotation, $G^{\rm c}(i,j,\theta)$, as
$G^{\rm c}=G^{\rm h}$ for $i\le N_{\rm b}$ and $j>N_{\rm b}$;
and
$G^{\rm c}(i,j,\theta)=G(i,j,\theta)G(i+N_{\rm b},j+N_{\rm b},\theta)$
for $i,j\le N_{\rm b}$.
\begin{definition}
A pair of real, symmetric matrices $A$, $B$ is called positive definite
if there exists a real $\mu$ such that $A-\mu B$ is positive definite.
\end{definition}
\begin{theorem}
Let $\eta H$ and $\eta$ be a positive definite pair. Then applying the
combined Givens rotations to $H$
with row-cyclic strategy results in convergence to
a diagonal matrix.
\end{theorem}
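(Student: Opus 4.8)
The plan is to adapt the classical convergence proof of the cyclic Jacobi method to the pseudo-Euclidean (hyperbolic) setting, where the key structural fact is that $\eta H$ is symmetric while the combined Givens rotations $G^{\rm c}$ are $\eta$-orthogonal, i.e. $(G^{\rm c})^t\eta G^{\rm c}=\eta$. First I would verify this $\eta$-orthogonality directly from the definitions: an ordinary Givens rotation acting on two indices both $\le N_{\rm b}$ (or both $>N_{\rm b}$) preserves the corresponding block of $\eta$ because those blocks are $\pm I$, and a hyperbolic rotation mixing an index $i\le N_{\rm b}$ with an index $j>N_{\rm b}$ preserves $\eta$ precisely because $\cosh^2\theta-\sinh^2\theta=1$ with the opposite signs supplied by $\eta$. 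Consequently the conjugation $H\mapsto (G^{\rm c})^{-1}HG^{\rm c}$ corresponds to $\eta H\mapsto (G^{\rm c})^t(\eta H)G^{\rm c}$, so symmetry of $\eta H$ is maintained throughout the sweep, and the transformed pair $(\eta H',\eta)$ remains a positive definite pair with the same $\mu$.

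The core of the argument is a monotone-quantity estimate. Let $S(H)$ denote the sum of squares of the off-diagonal entries of the symmetric matrix $\eta H$ (equivalently, a weighted off-diagonal norm of $H$ itself). For each index pair $(i,j)$ one chooses $\theta$ to annihilate the $(i,j)$ and $(j,i)$ entries of $\eta H$; the hard part is that for a hyperbolic rotation the analogue of the Jacobi identity "$S$ decreases by exactly twice the squared annihilated entry" can fail, because a hyperbolic rotation with large $\theta$ can amplify other off-diagonal entries without bound. This is where the positive-definite-pair hypothesis enters: because $\eta H-\mu\eta\succ 0$, the matrix $\eta H$ restricted to any $2\times2$ principal submatrix indexed by one "$+$" and one "$-$" coordinate is a symmetric matrix whose $\eta$-signature is controlled, which forces the rotation angle $\theta$ needed to diagonalize that submatrix to stay bounded (the relevant $2\times2$ pencil is definite, so the diagonalizing hyperbolic angle is finite and, crucially, uniformly bounded over the sweep by a constant depending only on $\mu$ and on $\lambda_{\min}$, $\lambda_{\max}$ of the initially definite combination). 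With $|\theta|$ uniformly bounded, the growth factors $\cosh\theta$, $|\sinh\theta|$ are uniformly bounded, and one recovers a quantitative inequality of the form $S(H') \le S(H) - c\,t_{ij}^2$ after the full cyclic sweep, where $t_{ij}$ is the annihilated entry and $c>0$ depends only on the bound on $\theta$ — this is the hyperbolic analogue of the Henrici/Schönhage estimate for cyclic Jacobi.

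From there the argument is standard: over one row-cyclic sweep every off-diagonal pair is annihilated once, so summing the per-rotation inequalities gives $S(H^{(k+1)}) \le (1-c')\,S(H^{(k)})$ for a fixed $c'\in(0,1)$ (or, if one wants to avoid the stronger one-sweep contraction, simply that $S$ is nonincreasing and the decrements are summable, hence $S(H^{(k)})\to 0$). Finiteness of $S$, together with the boundedness of all the accumulated transformations (again from the uniform $\theta$ bound, which prevents the product of $G^{\rm c}$'s from diverging), yields that $H^{(k)}$ converges to a limit whose $\eta H$ part is diagonal; since $\eta$ is diagonal, $H^{(k)}$ itself converges to a diagonal matrix. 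I would close by remarking that the off-diagonal decrement also shows the diagonal of $\eta H$ converges to the eigenvalues arranged by $\eta$-signature, consistent with the selection procedure in Theorem~1. The main obstacle, and the step deserving the most care, is the uniform bound on the hyperbolic rotation angles: without the positive-definite-pair hypothesis a near-degenerate indefinite $2\times2$ block forces $\theta\to\infty$ and the method provably diverges, so the proof must show that hypothesis is exactly what rules this out and must quantify the bound in a way that survives being invoked $O(n_{\rm b}^2)$ times per sweep.
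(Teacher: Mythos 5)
The paper does not actually prove this theorem: it states it and immediately defers to Veseli\'{c} \cite{veselic93}, so there is no in-paper argument to compare against. Your outline gets the setting right --- the combined rotations are $\eta$-orthogonal, the update $H\mapsto (G^{\rm c})^{-1}HG^{\rm c}$ is the congruence $\eta H\mapsto (G^{\rm c})^t(\eta H)G^{\rm c}$ preserving symmetry and the definiteness of the pair, and the danger to be controlled is the hyperbolic angles --- and you correctly identify that the positive-definite-pair hypothesis is what keeps each pivot angle finite (for a positive definite $2\times 2$ pivot block one has $a_{ij}^2<a_{ii}a_{jj}$, so $|\tanh 2\theta|=|2a_{ij}/(a_{ii}+a_{jj})|<1$).

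The genuine gap is in your core estimate. A uniform bound on $|\theta|$ that is not a bound by a \emph{small} quantity does not give $S(H')\le S(H)-c\,t_{ij}^2$: a hyperbolic rotation amplifies the other off-diagonal entries in the pivot rows and columns by factors up to $e^{2|\theta|}>1$, so all you get is $S(H')\le C\,S(H)-t_{ij}^2$ with $C>1$, which is not a contraction and cannot be summed over a sweep. The Henrici-type off-norm monotonicity genuinely fails here, and there is also a circularity in your angle bound: controlling $\theta$ at step $k$ requires controlling the conditioning of $\eta H^{(k)}$, which requires bounding the accumulated non-orthogonal transformations --- which is the thing you were trying to prove. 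The correct Lyapunov function (and the heart of Veseli\'{c}'s proof) is the trace of the shifted positive definite matrix $\eta H-\mu\eta$: trigonometric rotations leave it invariant, while a hyperbolic rotation that annihilates its pivot multiplies the pivot sub-trace $a_{ii}+a_{jj}$ by exactly $1/\cosh 2\theta\le 1$ and leaves the rest of the diagonal untouched. The trace is therefore non-increasing and bounded below by zero, so the decrements are summable; this forces $\theta_k\to 0$, bounds the condition number of the accumulated transformation, and only then does the standard cyclic-Jacobi off-norm machinery apply. Replacing your off-norm argument by this trace argument is the missing idea; the rest of your sketch then goes through.
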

\noindent See Veseli\'{c}\cite{veselic93} for proof.

\subsubsection{Solids}
Solid state calculations normally use periodic boundary conditions and
Bloch orbitals. Phonon displacements are of the form
\begin{align}\label{uphn}
\mathcal{U}_{n{\bf q}}({\bf R})=N_q^{-1/2}
 2^{-\frac{1}{2}}\nu^{\frac{1}{2}}{\bf e}_{n{\bf q}}e^{i{\bf q}\cdot{\bf R}},
\end{align}
where ${\bf q}$ is a reciprocal lattice vector, $\alpha$ labels a phonon
branch, ${\bf R}$ is a primitive lattice vector
and ${\bf e}_{n{\bf q}}$ is determined along with
$\nu_{n{\bf q}}$
by solving (\ref{evphn}) for each ${\bf q}$-vector individually.
These displacements are thus complex-valued but by
noting that $\nu_{n-{\bf q}}=\nu_{n{\bf q}}$ and
${\bf e}_{n-{\bf q}}={\bf e}_{n{\bf q}}^*$
we can form their real-valued counterparts
\begin{align*}
 \mathcal{U}_{n{\bf q}}^{(+)}({\bf R})=
 \frac{1}{\sqrt{2}}\left(\mathcal{U}_{n{\bf q}}({\bf R})
 +\mathcal{U}_{n-{\bf q}}({\bf R})\right) \qquad
 \mathcal{U}_{n{\bf q}}^{(-)}({\bf R})=
 \frac{-i}{\sqrt{2}}\left(\mathcal{U}_{n{\bf q}}({\bf R})
 -\mathcal{U}_{n-{\bf q}}({\bf R})\right).
\end{align*}
These are the displacements to which $\hat{d}_i$ and $\hat{d}_i^{\dag}$
refer and will thus keep the phonon Hamiltonian in (\ref{Hbs_ks_r}) real.
An approximate
electron-phonon vertex is obtained as a by-product of a phonon calculation:
\begin{align}
 \Gamma_{i{\bf k}+{\bf q},j{\bf k},n{\bf q}}
 =\frac{1}{2}\langle\varphi_{j{\bf k}+{\bf q}}|
 \partial\hat{V_s}/\partial\mathcal{U}_{n{\bf q}}|\varphi_{i{\bf k}}\rangle
\end{align}
where $\hat{V_s}$ is the Kohn-Sham potential and the derivative is with
respect to the magnitude of the displacement in (\ref{uphn}).
This is not Hermitian in the indices $i$ and $j$ because the potential
derivative corresponds to a complex displacement.
The vertex associated with $\mathcal{U}_{n{\bf q}}^{(\pm)}$ has the form
\begin{align}\label{hvertex}
 \bordermatrix{
    & {\bf k}-{\bf q} & {\bf k} & {\bf k}+{\bf q} \cr
  {\bf k}-{\bf q} & 0 & \Gamma_{n-{\bf q}} & 0 \cr
  {\bf k} & \Gamma_{n-{\bf q}}^{\dag} & 0 & \Gamma_{n{\bf q}} \cr
  {\bf k}+{\bf q} & 0 & \Gamma_{n{\bf q}}^{\dag} & 0}
\end{align}
which is a Hermitian matrix for all ${\bf q}$ and $n$.

One final point regarding solids is the requirement of keeping the
electronic densities
lattice periodic. This implies that the potentials $A$ and $B$ should only
couple the Bloch vector ${\bf k}$ with itself.

\section{Mean-field functionals}
The final (and possibly most difficult) step in this
theory is the determination of
potentials represented by the matrices $A$, $B$, $D$, $E$ and vector $F$.
In principle, these are chosen to reproduce the exact conditional density
$\rho_{\dub{\bf R}^0}({\bf r},t)$ in (\ref{rho_0}) as well as the
phononic expectation values $\langle\hat{d}_i\rangle$,
$\langle\hat{d}_i^{\dag}\hat{d}_j\rangle$, etc., which themselves reproduce
exact nuclear positions, momenta and so on.
In practice, these potentials need to be approximated and here we will employ
a simple mean-field approach by considering the lowest order diagrams
which enter the self-energy. These are plotted in Fig. \ref{fig_fd} and
involve the normal and anomalous, Kohn-Sham
electronic Green's functions
$i\mathcal{G}_{ij}(t,t')
 =\langle\Phi_0|T[\hat{a}_i(t)\hat{a}_j^{\dag}(t')]|\Phi_0\rangle$ and
$i\mathcal{F}_{ij}(t,t')
 =\langle\Phi_0|T[\hat{a}_i^{\dag}(t)\hat{a}_j^{\dag}(t')]|\Phi_0\rangle$, etc.,
as well as the phonon propagators
$i\mathcal{C}_i(t)=\langle\bar{0}|\hat{d}_i^{\dag}(t)|\bar{0}\rangle$, etc.
and
$i\mathcal{D}_{ij}(t,t')
 =\langle\bar{0}|T[\hat{d}_i(t)\hat{d}_j^{\dag}(t')]|\bar{0}\rangle$, etc.
These
quantities are evaluated around their respective Kohn-Sham ground states,
(\ref{gs_fm}) and (\ref{gs_bs}).
The quantity $A_0$ is given by the matrix elements of the single particle
Hamiltonian in (\ref{KS_fm_r}) without $V_{\rm fmc}$, and $D_0=\nu$.
\vskip 0.5cm
\begin{figure}[ht]
\centerline{\includegraphics[width=\textwidth]{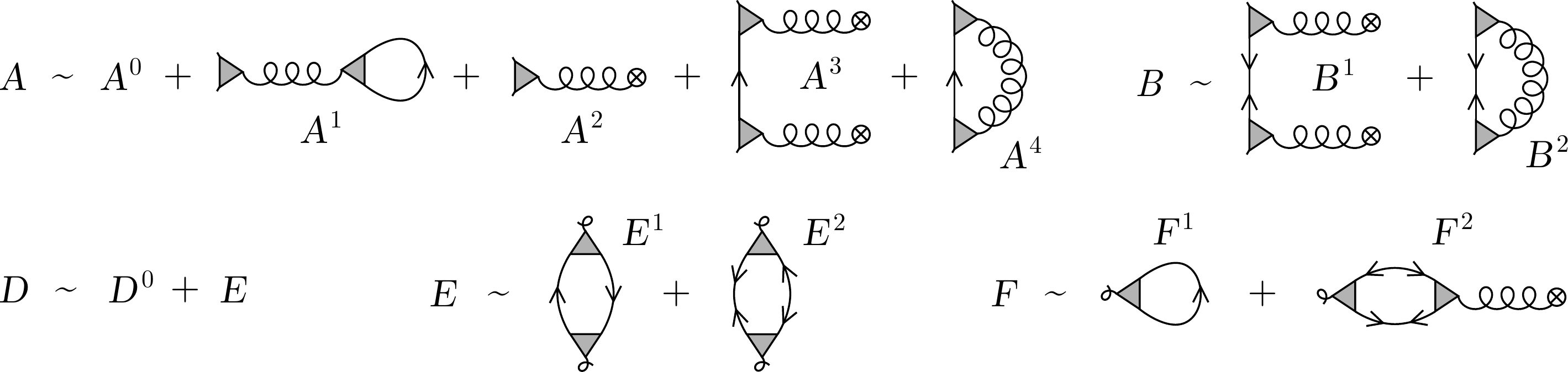}}
\caption[]{The lowest order contributions to the self-energy
 from the vertex
 $\Gamma=\raisebox{-2pt}{\mbox{\includegraphics[height=11pt]{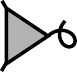}}}$,
 the normal and anomalous Green's functions
 $\mathcal{G}={\mbox{\includegraphics[height=6pt]{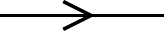}}}$ and
 $\mathcal{F}={\mbox{\includegraphics[height=6pt]{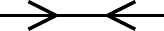}}}$,
 and the phonon propagators
 $\mathcal{C}={\mbox{\includegraphics[height=7pt]{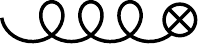}}}$ and
 $\mathcal{D}={\mbox{\includegraphics[height=7pt]{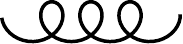}}}$.
 These are evaluated in the static limit as mean-field potentials for
 $A$, $B$, $D$, $E$ and $F$.}\label{fig_fd}
\end{figure}
Explicit expressions for the potentials are found by substituting
instantaneous densities or density matrices
of the electrons and phonons for the retarded correlation
functions in the diagrams.
For example, the electronic state would be affected by the phonon system via
the expectation values of the phonon operators, yielding a contribution to $A$:
\begin{align}
 A_{ij}^2(t)=\sum_k\Gamma_{ijk}\left(\langle\hat{d}_k^{\dag}\rangle_t
 +\langle\hat{d}_k\rangle_t\right),
\end{align}
where the expectation values are evaluated with (\ref{bs_obs1}) and
$\Gamma_{ijk}$ is shorthand for the vertex in (\ref{hvertex}).
At first glance, the matrix $A^3$ appears to be an improper part of the
self-energy which is already accounted for by $A^2$. Such a term is still valid
for solids with since $A^2$ can only ever couple ${\bf k}$ with itself.
However, the Green's function line in $A^3$ can carry momentum ${\bf q}\ne 0$
and yet have the potential preserve lattice periodicity.

The mean-field potential that gives rise to superconductivity is a little more
complicated:
\begin{align}
 B_{ij}^2(t)=-\sum_{klmn}\Gamma_{ikl}\Gamma_{mjn}
 \left(\langle\hat{a}_m^{\dag}\hat{a}_k^{\dag}\rangle_t
 +\langle\hat{a}_m\hat{a}_k\rangle_t\right)
 \left(\langle\hat{d}_l^{\dag}\hat{d}_n^{\dag}\rangle_t
 +\langle\hat{d}_l^{\dag}\hat{d}_n\rangle_t
 +\langle\hat{d}_l\hat{d}_n^{\dag}\rangle_t
 +\langle\hat{d}_l\hat{d}_n\rangle_t\right),
\end{align}
where the density matrices are determined from (\ref{dma_fm}) and
(\ref{bs_obs2}).
The potential represented by $F$ would be
\begin{align}
 F_k^1(t)=\sum_{ij}\Gamma_{ijk}\gamma_{ij}(t)
\end{align}
where $\gamma_{ij}(t)=\langle\hat{a}_i^{\dag}\hat{a}_j\rangle_t$
is the electronic one-reduced density matrix calculated using (\ref{dm_fm}).
The matrix $E^1$ is evaluated as:
\begin{align}\label{mat_e1}
 E_{ij}^1(t)=\sum_{klmn}\Gamma_{kli}\Gamma_{mnj}
 \gamma_{kn}(t)\gamma_{ml}(t).
\end{align}
This matrix should be positive semi-definite in order to satisfy
Corollary \ref{cor_psd} and guarantee real eigenvalues for the bosonic
Hamiltonian in (\ref{Hbs_ks_r}).
\begin{lemma}
 The matrix $E^1$ is positive semi-definite.
\end{lemma}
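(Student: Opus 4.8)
The plan is to prove positive semi-definiteness straight from the definition (\ref{mat_e1}), using only that the electronic one-reduced density matrix is positive semi-definite and that the electron-phonon vertex is Hermitian in its electronic index pair (as recorded below (\ref{hvertex})). First I would note that $E^1$ is real and symmetric: relabelling the dummy indices (send $i\leftrightarrow j$ and $(k,l)\leftrightarrow(m,n)$) gives $E^1_{ij}=E^1_{ji}$, and taking the complex conjugate while using $\Gamma_{kli}=\Gamma^*_{lki}$ and $\gamma_{kn}=\gamma^*_{nk}$ gives $(E^1_{ij})^*=E^1_{ij}$. Hence it suffices to show $z^{\dag}E^1z\ge 0$ for every $z\in\mathbb{C}^{n_{\rm b}}$. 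The key input is that $\gamma_{ij}(t)=\langle\hat a_i^{\dag}\hat a_j\rangle_t$ is Hermitian and positive semi-definite: for any vector $c$, $c^{\dag}\gamma c=\| (\sum_j c_j\hat a_j)|\Phi_0\rangle \|^2\ge 0$ (this is also manifest from the explicit form (\ref{dm_fm})). Thus $\gamma$ has a Hermitian positive semi-definite square root $g$ with $g^2=\gamma$.

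Next I would contract the phonon index of each vertex factor with $z$, defining the electronic matrix $Q_{mn}\coloneqq\sum_j\Gamma_{mnj}z_j$. Hermiticity of the vertex turns the conjugate factor into $\sum_i z_i^*\Gamma_{kli}=\bigl(\sum_i z_i\Gamma_{lki}\bigr)^*=Q_{lk}^*$, so that, recognising the four remaining sums over electronic indices as matrix multiplications,
\begin{align*}
 z^{\dag}E^1z=\sum_{klmn}Q_{lk}^*\,\gamma_{kn}\,Q_{mn}\,\gamma_{ml}
 ={\rm tr}\bigl(Q^*\gamma Q^t\gamma\bigr),
\end{align*}
where $Q^*$ denotes the entrywise conjugate and $Q^t$ the transpose of $Q$. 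To close the argument I would substitute $\gamma=gg$ with $g=g^{\dag}$ and use cyclicity of the trace together with $(Q^*)^{\dag}=Q^t$:
\begin{align*}
 {\rm tr}\bigl(Q^*\gamma Q^t\gamma\bigr)={\rm tr}\bigl((gQ^*g)(gQ^*g)^{\dag}\bigr)\ge 0,
\end{align*}
which is the statement that a squared Frobenius norm is non-negative. Since $z$ is arbitrary, $E^1\succeq 0$.

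I expect the only real obstacle to be bookkeeping rather than substance: keeping straight which slot of $\Gamma$ is the phonon index in (\ref{mat_e1}) and invoking the vertex Hermiticity in the remaining electronic pair exactly once --- precisely in the step $\sum_i z_i^*\Gamma_{kli}=Q_{lk}^*$. Everything past that is the standard observation that the ``sandwich'' $\gamma\mapsto{\rm tr}(Q^*\gamma Q^t\gamma)$ is non-negative whenever $\gamma\succeq 0$, which the Hermitian square root makes transparent.
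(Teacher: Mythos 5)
Your proof is correct and follows essentially the same route as the paper's: contract the phonon index of the vertex with the test vector, rewrite the quadratic form as a trace of the shape ${\rm tr}(\,\cdot\,\gamma\,\cdot\,\gamma)$, and conclude from positivity of $\gamma$. The only difference is cosmetic: the paper restricts to real test vectors (so that the contracted vertex $R$ is Hermitian) and diagonalizes $\gamma$, invoking the $N$-representability bounds $0\le\tilde{\gamma}_i\le 1$, whereas you keep $z$ complex and use the Hermitian square root of $\gamma$ together with a Frobenius-norm identity, which needs only $\gamma\succeq 0$.
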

\begin{proof}
 We first note that $\Gamma_{kli}=\Gamma_{lki}^*$ for all $i$, i.e. $\Gamma$ is
 Hermitian in the electronic indices.
 Since $\Gamma_{kli}\Gamma_{mnj}\gamma_{kn}\gamma_{ml}$ and
 $\Gamma_{lki}\Gamma_{nmj}\gamma_{lm}\gamma_{nk}
 =\Gamma_{kli}^*\Gamma_{mnj}^*\gamma_{ml}^*\gamma_{kn}^*$
 both appear in the sum in (\ref{mat_e1}) then
 $E^1$ must be real and symmetric.
 Let $v$ be a real vector of the same dimension as $E^1$,
 then $R_{kl}\equiv\sum_i v_i \Gamma_{kli}$ is also Hermitian.
 The quantity $s\equiv v^t E^1 v$ can be written as
 $s={\rm tr}(R^{\dag}\gamma R^{\dag}\gamma)$.
 Let $U$ be the unitary transformation that diagonalizes $\gamma$ and
 define ${\rm diag}(\tilde{\gamma})\equiv U^{\dag}\gamma U$ and
 $\tilde{R}\equiv U^{\dag}RU$, then
 $s={\rm tr}(\tilde{R}^{\dag}\tilde{\gamma}\tilde{R}^{\dag}\tilde{\gamma})$
 is left invariant.
 One of the $N$-representable properties\cite{coleman63} of $\gamma$
 is that its eigenvalues satisfy $0\le\tilde{\gamma}_i\le 1$.
 Then $s=\sum_{kl}|\tilde{R}_{kl}|^2\tilde{\gamma}_k\tilde{\gamma}_l\ge 0$.
 Since $v$ was chosen arbitrarily we conclude that
 $E^1$ is positive semi-definite.
\end{proof}

\section{Summary}
We have defined Kohn-Sham equations for fermions and bosons which are
designed to reproduce conditional electronic densities as well as
expectation values of the phonon creation and annihilation operators.
Sufficient conditions which guarantee real eigenvalues for the bosonic system
were found.
In practice, the potential matrix elements $A$, $B$, $D$, $E$ and $F$
can be approximated using mean-field potentials
inspired from a diagrammatic expansion of the self-energy. The electron and
phonon density matrices are determined either self-consistently in a
ground state calculation or via simultaneous propagation in the time-dependent
case. Any solution obtained in this way is thus non-perturbative.
These equations can be implemented in both finite and solid-state codes
using quantities determined from linear-response phonon calculations.

\section*{Acknowledgments}
We would like to thank James Annett for pointing out the similarity of
our bosonic analysis to that in Ref. \cite{Colpa1978}. We acknowledge
DFG for funding through SPP-QUTIF and SFB-TRR227.

\bibliographystyle{unsrt}

\end{document}